\newcommand{\sqlt}{\sqsubseteq}
\DeclareMathOperator{\Fix}{\nu}
\newcommand{\ioi}{\Leftrightarrow}
\newcommand{\der}[1]{\ensuremath{\stackrel{#1}{\longrightarrow}}}
\newcommand{\sder}[1]{\ensuremath{\stackrel{#1}{\rightarrow}}}
\newcommand{\fun}{\rightarrow}
\newcommand{\ol}[1]{\overline{ #1}}
\newcommand{\Act}{{\bf A}}
\newcommand{\Proc}{{\bf P}}
\newcommand{\Psys}{P}
\newcommand{\F}{{\cal F}}
\newcommand{\V}{{\cal V}}
\newcommand{\G}{{\cal G}}
\newcommand{\Po}{{\cal P}}
\newcommand{\M}{{\cal M}}
\newcommand{\X}{{\cal X}}
\newcommand\ff{f\!\!f} 
\newcommand\tr{t\!t}
\newcommand\lb {[\![}
\newcommand\rb{]\!]}
\newcommand{\must}[1]{[ #1 ]}
\newcommand{\may}[1]{\langle #1 \rangle}
\newcommand{\smay}[1]{\langle\cdot #1 \cdot\rangle}
\newcommand{\smust}[1]{[ \cdot #1 \cdot ]}
\newcommand{\sem}[1]{\relax\ifmmode \lb #1 \rb \else $\lb #1 \rb$ \fi}
\newcommand{\semp}{\sem}
\newtheorem{theorem}{Theorem}[section]
\newtheorem{lemma}[theorem]{Lemma}
\newtheorem{corollary}[theorem]{Corollary}
\newtheorem{definition}[theorem]{Definition}
\newenvironment{proof}[1][Proof]{\begin{trivlist}
\item[\hskip \labelsep {\bfseries #1}]}{\end{trivlist}}
\newenvironment{example}[1][Example]{\begin{trivlist}
\item[\hskip \labelsep {\bfseries #1}]}{\end{trivlist}}
\title{Characteristic Formulae for Relations with Nested Fixed
  Points\thanks{Supported by the project Processes and Modal Logics'
    (project nr.~100048021) of the Icelandic Research Fund.} }
\author{Luca Aceto\quad Anna Ing\'olfsd\'ottir\thanks{Supported by the
    VELUX visiting professorship funded by the VILLUM FOUNDATION.}
  \institute{ICE-TCS, School of Computer Science\\Reykjavik University \\
    Reykjavik, Iceland} \email{\quad $\{$luca,annai$\}$@ru.is}}
\begin{document}
\maketitle
\begin{abstract}
  A general framework for the connection between characteristic
  formulae and behavioral semantics is described in
  \cite{AILS2011}. This approach does not suitably cover semantics
  defined by nested fixed points, such as the $n$-nested simulation
  semantics for $n$ greater than $2$. In this study we address this
  deficiency and give a description of nested fixed points that
  extends the approach for single fixed points in an intuitive and
  comprehensive way.
\end{abstract}          

\section{Introduction}
In process theory it has become a standard practice to describe
behavioural semantics in terms of equivalences or preorders. A wealth
of such relations has been classified by van Glabbeek in his linear
time/branching time spectrum~\cite{vG2001}. Branching-time behavioural
semantics are often defined as largest fixed points of monotonic
functions over the complete lattice of binary relations over
processes.

In \cite{AILS2011} we give a general framework to reason about how
this type of behavioral semantics can be characterized by a modal
logic equipped with a greatest fixed point operator, or more precisely
by characteristic formulae expressed in such a logic. In that
reference we show that a behavioural relation that is derived as a
greatest fixed point of a function of relations over processes is
given by the greatest fixed point of the semantic interpretation of a
logical declaration that expresses the function in a formal sense that
is defined in present paper.  Roughly speaking if a logical
declaration describes a monotonic function over a complete lattice
then its fixed point describes exactly the fixed point of the
function. In \cite{AILS2011} preorders and equivalences such as simulation preorder
and bisimulation equivalence are characterized following this approach
in a simple and constructive way. However, when the definition of a
behavioural relation involves nested fixed points, i.~e.~when the
monotonic function that defines the relation takes another fixed point
as an argument, things get more complicated.  The framework offered in
\cite{AILS2011} only deals with nesting on two levels and in a rather
clumsy and unintuitive way. Furthermore it does not extend naturally
to deeper nesting, like for the $n$-nested simulations for $n>2$. In
this study we address this deficiency and define a logical framework
in which relations obtained as a chain of nested fixed points of
monotonic functions can be characterized following general
principles. This extends the approach for single fixed points in an
intuitive and comprehensive way.

As the applications we present in the paper only deal with nesting of
greatest fixed points, this study only focuses on greatest fixed
points. However it is straightforward to extend it to deal with
alternating nesting of both least and greatest fixed points.  We also
believe that our approach gives some idea about how fixed point
theories in different domains can be compared in a structured way.

The remainder of the paper is organized as follows. Section~2 presents
some background on fixed points of monotone functions. Section~3
briefly introduces the model of labelled transition systems and some
results on behavioural relations defined as greatest fixed points of
monotonic functions over binary relations. The logic we shall use to
define characteristic formulae in a uniform fashion is discussed in
Section~4. The key notion of a declaration expressing a monotone
function is also given in that section. Section~5 is devoted to an
application of our framework to the logical characterization of the
family of nested simulation semantics.

%\section{Some definitions}
\section{Posets, monotone functions and fixed points}
In this section we introduce some basic concepts we need in the paper.
\begin{definition}\quad
\begin{itemize}
\item 
  A \emph{partially ordered set}, or \emph{poset}, $(A,\sqsubseteq_A)$
  (usually referred to simply as $A$) consists of a set $A$ and a
  partial order $\sqlt_A$ over it. 
\item If $A$ is a poset and $M\subseteq A$, then $a\in A$ is an
  \emph{upper bound} for $M$ if $m\sqlt_ Aa$ for all $m\in M$. $a$ is a
  \emph{least upper bound} (lub) for $M$ if it is an upper bound for
  $M$ and if whenever $b$ is an upper bound for $M$ then $a\sqlt_ Ab$.
\item A poset $A$ is a \emph{complete lattice} if the $\textit{lub}$
  for $M$ exists for all $M\subseteq A$.
\item For posets $A$ and $B$, a function $\phi: A \fun B$ is
  monotone if it is order preserving; it is an \emph{isomorphism} if it is
  bijective and both $\phi$ and its inverse $\phi^{-1}$ are monotone.
  We let $A\fun_{mono} B$ denote the set of monotone functions from $A$ to
  $B$.
\item If $A$ is a poset and $f\in A\fun_{mono} A$, then $x \in A$ is a fixed
  point of $f$ if $f(x)=x$.  We write $\nu\!f$ (or $\nu x.f(x)$) for
  the greatest fixed point  of $f$ if it exists.
\item If $A$ and $B$ are posets, $f\in A\fun_{mono} A$ and $\phi\in
  A\fun_{mono} B$ is an isomorphism then we define $\phi^*f:B\fun B$ as
  $\phi^*f=\phi\circ f\circ\phi^{-1}$.
%\item If  $A$ and $B$ are posets, $f:A\times A\fun B$ and $a\in A$, then 
%$f_a:A\fun B$ is defined by $f_a(x)=f(a,x)$.
\end{itemize}
\end{definition}
Note that the $\mathit{lub}$ of a subset of a poset $A$ is unique if it exists
and the same holds for greatest fixed points of monotone functions over
posets. It is well known, that if $A$ and $B$ are posets/complete lattices
and $I$ is some set, then the Cartesian product $A\times B$ and the
function space $I\fun A$ are a posets/complete lattices under the
pointwise ordering. The following theorem is due to Tarski.
\begin{theorem}[\cite{Ta55}]\label{tarski}
  If $A$ is a complete lattice and $f\in A\fun_{mono} A$, then $f$ has
  a unique greatest fixed point.
\end{theorem}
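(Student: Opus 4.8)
The plan is to prove Tarski's theorem by exhibiting the greatest fixed point explicitly as a least upper bound of a suitable set of ``post-fixed points'' and then verifying that it is indeed a fixed point and the greatest one. Concretely, I would set $P = \setbr{x \in A \mid x \sqlt_A f(x)}$, the set of all post-fixed points of $f$, and let $p$ be the $\textit{lub}$ of $P$, which exists because $A$ is a complete lattice. Uniqueness of the greatest fixed point, once existence is established, is immediate: if $p$ and $p'$ are both greatest fixed points then $p \sqlt_A p'$ and $p' \sqlt_A p$, so $p = p'$ by antisymmetry of $\sqlt_A$; this is already noted in the remark following the definition, so the real content is existence.

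First I would show $p \sqlt_A f(p)$. For every $x \in P$ we have $x \sqlt_A f(x)$, and since $x \sqlt_A p$ and $f$ is monotone, $f(x) \sqlt_A f(p)$; hence $x \sqlt_A f(p)$ for all $x \in P$, so $f(p)$ is an upper bound for $P$, and by the definition of $p$ as the \emph{least} upper bound, $p \sqlt_A f(p)$. Next I would show the reverse inequality $f(p) \sqlt_A p$: applying monotonicity to $p \sqlt_A f(p)$ gives $f(p) \sqlt_A f(f(p))$, which says precisely that $f(p) \in P$; therefore $f(p) \sqlt_A p$ because $p$ is an upper bound for $P$. Antisymmetry then yields $f(p) = p$, so $p$ is a fixed point.

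Finally, $p$ is the greatest fixed point: any fixed point $q$ satisfies $q = f(q)$, hence in particular $q \sqlt_A f(q)$, so $q \in P$ and therefore $q \sqlt_A p$. Combining everything, $p$ is a fixed point above every fixed point, i.e.\ the greatest fixed point, and it is unique by the antisymmetry argument above.

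I do not expect any serious obstacle here; the only point requiring a little care is making sure the completeness hypothesis is used exactly where needed — namely to guarantee that the $\textit{lub}$ of $P$ exists in the first place — and that monotonicity is invoked in both directions (to show $f(p)$ bounds $P$ and to show $f(p)$ lies in $P$). One could alternatively run the dual argument with pre-fixed points and greatest lower bounds, but since the statement only asks for the greatest fixed point the post-fixed-point formulation is the most direct route.
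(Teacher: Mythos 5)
Your proof is correct and is the standard Knaster--Tarski argument: take $p$ to be the $\textit{lub}$ of the post-fixed points, show $p \sqlt_A f(p)$ via monotonicity, deduce $f(p) \in P$ and hence $f(p) \sqlt_A p$, and observe that every fixed point lies in $P$. The paper itself offers no proof --- it simply cites \cite{Ta55} --- so there is nothing to diverge from; your argument is exactly the classical one, with the completeness and monotonicity hypotheses invoked in the right places.
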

The theorem below is proved in \cite{AILS2011} and is the key to the
general theory we present in this paper.
\begin{theorem} \label{thm:isomorph}Let $A$ and $B$ be posets, $f\in
  A\fun_{mono} A$ and $\phi: A\fun B$ be an isomorphism. Then $\nu f$
  exists iff $\nu (\phi^*f)$ exists. If these fixed points exist then
  $\phi(\nu f)= \nu(\phi^*f)$.  
\iffalse
\item If $f,g\in A\fun A$ then
  $\phi(f(\alpha g))=(\phi^*f)(\alpha(\phi^*g))$.

\item  $\phi^*: (A\fun A)\fun(B\fun B)$ is an isomorphism.
\item If $g\in A^{k+n}\fun A^k$ then $\alpha\ol{y}.g(\ol{x},\ol{y})$
  is well-defined for fixed $\ol{x}$ and $f:\ol{x}\mapsto
  \alpha\ol{y}.g(\ol{x},\ol{y})$ is in $A^k\fun A^k$ for $k,m\geq 1$
  and $\alpha\in\{\mu,\nu\}$.
\item If  $A$ and $B$ are posets, $f:A\times A\fun B$ and $a\in A$, then 
$\phi^*(f_a):B\fun B$ and $\phi^*(f_a)=\phi^*(f)_{\phi(a)}$.\fi
\end{theorem}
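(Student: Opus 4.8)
The plan is to prove the two claims of Theorem~\ref{thm:isomorph} together by exploiting the fact that an isomorphism of posets carries the whole fixed-point structure of $f$ over to $\phi^*f$. The central observation I would record first is a simple correspondence lemma: for every $x\in A$, $x$ is a fixed point of $f$ if and only if $\phi(x)$ is a fixed point of $\phi^*f$. This is immediate from the definition $\phi^*f=\phi\circ f\circ\phi^{-1}$, since $(\phi^*f)(\phi(x))=\phi(f(\phi^{-1}(\phi(x))))=\phi(f(x))$, so $(\phi^*f)(\phi(x))=\phi(x)$ exactly when $\phi(f(x))=\phi(x)$, and the latter is equivalent to $f(x)=x$ because $\phi$ is injective. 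Conversely, since $\phi$ is a bijection, every fixed point of $\phi^*f$ arises as $\phi(x)$ for a (unique) $x\in A$, and by the same computation $x$ is then a fixed point of $f$. Thus $\phi$ restricts to a bijection between $\mathrm{Fix}(f)$ and $\mathrm{Fix}(\phi^*f)$.

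Next I would upgrade this set-theoretic bijection to an order isomorphism between the two sets of fixed points: since $\phi$ is monotone and its inverse is monotone, $x\sqlt_A y$ iff $\phi(x)\sqlt_B\phi(y)$, so $\phi$ restricted to $\mathrm{Fix}(f)$ preserves and reflects the order. Consequently $\mathrm{Fix}(f)$ has a greatest element if and only if $\mathrm{Fix}(\phi^*f)$ does, and when they exist $\phi$ maps the greatest element of the former to the greatest element of the latter. Spelling this out: if $\nu f$ exists, then for any fixed point $\phi(x)$ of $\phi^*f$ we have $x\sqlt_A\nu f$, hence $\phi(x)\sqlt_B\phi(\nu f)$, and since $\phi(\nu f)$ is itself a fixed point of $\phi^*f$ by the correspondence lemma, it is the greatest one, i.e.\ $\nu(\phi^*f)=\phi(\nu f)$. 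The reverse implication is symmetric, using that $\phi^{-1}$ is also an isomorphism and that $(\phi^{-1})^*(\phi^*f)=f$.

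To keep the argument fully symmetric and avoid repeating essentially the same reasoning twice, I would note once and for all that $\phi^{-1}:B\fun A$ is an isomorphism with $(\phi^{-1})^*(\phi^* f) = \phi^{-1}\circ\phi\circ f\circ\phi^{-1}\circ\phi = f$, so the ``if'' direction of the theorem is just the ``only if'' direction applied to $\phi^{-1}$ and $\phi^*f$ in place of $\phi$ and $f$. Then only one direction of the equivalence needs a real argument.

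I do not expect a serious obstacle here: the result is essentially a bookkeeping exercise about transporting structure along an isomorphism, and Tarski's theorem (Theorem~\ref{tarski}) is not even needed, since we are not assuming $A$ or $B$ is a complete lattice, only that the relevant greatest fixed points exist. The only point that needs a little care is making sure that surjectivity of $\phi$ is genuinely used --- it is what guarantees that \emph{every} fixed point of $\phi^*f$ is hit, which is needed both for the bijection of fixed-point sets and for concluding that $\phi(\nu f)$ is the \emph{greatest} fixed point of $\phi^*f$ rather than merely an upper bound of some of them. Beyond that, the proof is a direct chain of the definitions of $\phi^*f$, monotonicity, and the universal property of the greatest fixed point.
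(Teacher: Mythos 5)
Your proof is correct and complete: the computation $(\phi^*f)(\phi(x))=\phi(f(x))$ together with injectivity gives the fixed-point correspondence, surjectivity ensures every fixed point of $\phi^*f$ is accounted for, and order-preservation in both directions transports the greatest element. The paper itself does not prove Theorem~\ref{thm:isomorph} but defers to \cite{AILS2011}, so there is no in-paper argument to compare against; your transport-of-structure argument is the standard one, and your observations that Tarski's theorem is not needed and that the converse direction follows by applying the forward direction to $\phi^{-1}$ (since $(\phi^{-1})^*(\phi^*f)=f$) are both accurate.
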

\iffalse
\begin{corollary}
  If $g\in A^{k+n}\fun A^k$ and $h:A^k\fun A^k$ then 
  $\phi(\alpha\ol{y}.g(\beta.h,\ol{y}))=\alpha \ol{z}.\phi^*g(\beta.\phi^*h,\ol{z})$.
\end{corollary}
\fi
\section{Labelled transition systems and behavioural relations}\label{sec:LTS}
It has become standard practice to describe behavioural semantics of
processes by means of a \emph{labelled transition system} as defined
below.  
 \begin{definition}[\cite{Ke76}]
   A \emph{labelled transition system (LTS)} is a triple
   $\Psys =(\Proc,\Act, \sder{})$ where
   \begin{itemize}
   \item $\Act$ is a finite set (of actions),
  \item $\Proc$ is a finite set (of processes), and
  \item $\sder{}\subseteq \Proc\times\Act\times\Proc$ is a transition relation.
  \end{itemize}
\end{definition}
As usual, we write $p\der{a}p'$ for $(p,a,p')\in \sder{}$. Throughout
this paper we assume that the set $\Act$ is fixed.

As LTSs are in general to concrete, processes are compared by
preorders or equivalences. These are often obtained as the greatest
fixed points to monotone endofunctions on the complete lattice
$\Po(\Proc \times \Proc)$. We will show some example of such functions
but first we state and  prove some properties.
\begin{definition}
  If $\F\in\Po(\Proc \times \Proc)\fun_{mono}\Po(\Proc \times \Proc)$
  and $A\in\Po(\Proc \times \Proc)$, we define
\begin{itemize}
\item
$\tilde{\F}:S \mapsto (\F(S^{-1}))^{-1}$, and 
\item
$\F\cap A: S\mapsto\F(S)\cap A$.
\end{itemize}
\end{definition}
The following lemma  will be applied below.
\begin{lemma} \label{L:inverse} Let $\F\in\Po(\Proc \times
  \Proc)\fun_{mono}\Po(\Proc \times \Proc)$ and $A\in\Po(\Proc \times
  \Proc)$.  Then 
\begin{itemize}
\item
$\tilde{\F},\F\cap A\in\Po(\Proc \times
  \Proc)\fun_{mono}\Po(\Proc \times \Proc)$,
\item
$\nu \tilde{\F}=(\nu\F)^{-1}$ and
\item $\widetilde{\F\cap A}=\tilde{\F}\cap A^{-1}$.
\end{itemize}
\end{lemma}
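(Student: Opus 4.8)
The plan is to prove the three items in order, the crucial observation being that the relational-inverse operation $\phi : \Po(\Proc\times\Proc)\fun\Po(\Proc\times\Proc)$, $\phi(S)=S^{-1}$, is an isomorphism of the complete lattice $\Po(\Proc\times\Proc)$ onto itself: it is a bijection, it coincides with its own inverse, and it is monotone since $S\subseteq T$ implies $S^{-1}\subseteq T^{-1}$. Note also that $\tilde\F=\phi^*\F$, because $\phi^*\F(S)=\phi(\F(\phi^{-1}(S)))=(\F(S^{-1}))^{-1}=\tilde\F(S)$.

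For the first item I would argue monotonicity directly. If $S\subseteq T$ then $\F(S)\subseteq\F(T)$ by monotonicity of $\F$, hence $\F(S)\cap A\subseteq\F(T)\cap A$, so $\F\cap A$ is monotone; and $\tilde\F=\phi\circ\F\circ\phi$ is monotone as a composition of monotone maps. Both maps obviously return subsets of $\Proc\times\Proc$, so they indeed lie in $\Po(\Proc\times\Proc)\fun_{mono}\Po(\Proc\times\Proc)$.

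For the second item, since $\Po(\Proc\times\Proc)$ is a complete lattice, Theorem~\ref{tarski} guarantees that $\nu\F$ exists, and then Theorem~\ref{thm:isomorph} applied to the isomorphism $\phi$ and $f=\F$ yields $\nu\tilde\F=\nu(\phi^*\F)=\phi(\nu\F)=(\nu\F)^{-1}$. If one prefers to stay elementary, the same identity follows by checking directly that $(\nu\F)^{-1}$ is a fixed point of $\tilde\F$ --- indeed $\tilde\F((\nu\F)^{-1})=(\F(\nu\F))^{-1}=(\nu\F)^{-1}$ --- and that it is the greatest one: from $S=\tilde\F(S)$ one gets $\F(S^{-1})=S^{-1}$, whence $S^{-1}\subseteq\nu\F$ and so $S\subseteq(\nu\F)^{-1}$.

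For the third item I would simply unfold the definitions and use that relational inverse distributes over intersection, $(X\cap Y)^{-1}=X^{-1}\cap Y^{-1}$: for every $S$,
\[
\widetilde{\F\cap A}(S)=\bigl((\F\cap A)(S^{-1})\bigr)^{-1}=\bigl(\F(S^{-1})\cap A\bigr)^{-1}=(\F(S^{-1}))^{-1}\cap A^{-1}=\tilde\F(S)\cap A^{-1}=(\tilde\F\cap A^{-1})(S).
\]
None of these steps is a serious obstacle; the only points needing a little care are verifying that relational inverse really is a lattice isomorphism (so that Theorem~\ref{thm:isomorph} is applicable) and recalling the distributivity of inverse over intersection used in the last display.
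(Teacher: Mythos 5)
Your proof is correct. For the third item your computation is literally the same one-line unfolding the paper gives, using distributivity of relational inverse over intersection. The only real difference is that the paper dispatches the first two items with a citation to \cite{AILS2011}, whereas you prove them in full: monotonicity of $\F\cap A$ by a direct argument, and $\nu\tilde{\F}=(\nu\F)^{-1}$ by observing that $S\mapsto S^{-1}$ is a self-inverse automorphism of the complete lattice $\Po(\Proc\times\Proc)$, that $\tilde{\F}$ is its conjugate $\phi^*\F$, and then invoking Theorem~\ref{thm:isomorph} (with the elementary fixed-point verification as a fallback). Both of your arguments for item two are sound --- the maximality step, deducing $S^{-1}\subseteq\nu\F$ from $S^{-1}=\F(S^{-1})$, is exactly what is needed --- so your write-up is a self-contained version of what the paper delegates to the earlier reference, at no extra cost in length.
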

\begin{proof}
  The first two statements are proved in \cite{AILS2011}. To prove the third
  one we proceed follows:
\[
(\widetilde{\F\cap A})(S)=((\F\cap A)(S^{-1}))^{-1}=(\F(S^{-1}))^{-1}\cap A^{-1}=(\tilde{\F}\cap A^{-1})(S).
\]
\end{proof}
We will complete this section by giving some examples of endofunction that
define some standard behavioural preorders and equivalences
\cite{vG2001,AILS2007}.
\begin{definition}
  Let $\F:\Po(\Proc \times \Proc)\fun \Po(\Proc \times \Proc)$ be
  defined as follows: 
\[
(p,q)\in\F(S)\mbox{ iff }\forall
a\in\Act,p'\in\Proc. p\der{a}p'\Rightarrow \exists q'\in\Proc
\,.q\der{a}q'\land(p',q')\in S.
\]
\end{definition}
It is easy to check that $\F$ is monotonic and therefore it has a
greatest fixed point.
\begin{definition}
We define:
\begin{itemize}
\item $\F_{sim}=\F$ and $\sqlt_{sim}=\nu\F_{sim}$ (simulation
  preorder),
\item $\F_{opsim}=\tilde{\F}$ and $\sqlt_{opsim}=\nu\F_{opsim}$ (inverse
  simulation preorder),
\item $\sim_{sim}=\sqlt_{sim}\cap \sqlt_{opsim}$ (simulation equivalence) and
\item $\F_{bisim}=\F_{sim}\cap\F_{opsim}$ and
  $\sim_{bisim}=\nu\F_{bisim}$ (bisimulation equivalence).
\end{itemize}
%\end{lemma}
\end{definition}
\section{Equational modal $\nu$-calculi with nested fixed-points}
\label{sect:hennmil}
In this section we introduce variants of the standard equational modal
$\mu$-calculus \cite{Koz83}. Like in \cite{Larsen1990} these variants
only allow for nested fixed points, i.~e.~where the logical languages
form a hierarchy where fixed points in a language on one level are
allowed as constants in the logic on the level above. Our approach,
however, differs from the original one in the sense that the
fixed-point operator is explicit in the syntax and can therefore be
used in logical expressions. In this study we only focus on greatest
fixed points (which explains the title of this section) but the
framework can easily be extended to involve nesting of both greatest
and least fixed points. The logical languages we introduce depend on
the implicitly assumed fixed finite set $\Act$.

Our basic logic $\M$ is the standard Hennessy-Milner Logic (HML)
\cite{HM85} without variables.  This logic is generated by
$\Sigma=(\Sigma_0,\Sigma_1, \Sigma_2)$ where $\Sigma_0=\{\tr,\ff\}$
are the constants or the operators of arity $0$,
$\Sigma_1=\{\may{a},\must{a},a\in\Act\}$ are the operators of arity
$1$, and $\Sigma_2=\{\land,\lor\}$ are the operators of arity
$2$.

The formulae in $\M$ are interpreted over an LTS $(\Proc, \Act,
\sder{})$ as the set of elements from $\Proc$ that satisfy
them. Satisfaction is determined by a semantic function that is
defined below.  For $M\subseteq\Proc$ we let
$\smay{a}M=\{p\in\Proc\mid\exists q\in M.p\der{a}q\}$, and $\smust{a}M
=\overline{\smay{a}\overline{M}}$ where $\overline{M}$ is the
complement of the set $M$.
\begin{definition}\label{def:semp}\quad
The semantic function  $\M\sem{\,\,\,}$ is defined as follows:
\begin{enumerate}
\item $\M\semp{\tr}=\Proc,\,\M\semp{\ff}=\emptyset$,
\item
  $\M\semp{F_1\land F_2}= \M\semp{F_1}\cap\M\semp{F_2},\,
  \M\semp{F_1\lor F_2}= \M\semp{F_1}\cup\M\semp{F_2}$,
\item
  $\M\semp{\may{a}F}=\smay{a}\M\semp{F}, \,
  \M\semp{\must{a}F}=\smust{a}\M\semp{F}$.
\end{enumerate}
\end{definition}
The logic $\V$ is the standard Hennessy-Milner logic with variables
that was introduced in \cite{Larsen1990}. It assumes a finite index
set $I$ and an $I$-indexed set of variables $\X$. In what remains of
this paper we assume a fixed pair of such $I$ and $\X$, unless stated
otherwise.

As the elements of $\V$ typically contain variables, they have to
be interpreted with respect to a variable \emph{interpretation}
$\sigma\in \Po(\Proc)^{I}$ that associates to each $i\in I$ the set of
processes in $\Proc$ that are assumed to satisfy the variable $X_i$.
The semantic function $\V\semp{\,\,\,}$ in this case takes a formula
$F$ and a $\sigma\in\Po(\Proc)^{I}$ and delivers an element of
$\Po(\Proc)$.
\begin{definition}\label{def:V}
The semantic function $\V\semp{\,\,\,}$ is defined as follows:
\begin{enumerate}
\item $\V\semp{F}\sigma=\M\semp{F}$ if $F\in\Sigma_0$,
\item 
  $\V\semp{X_i}\sigma=\sigma(i)$, $i\in I$,
\item $\V\semp{F_1\land F_2}\sigma=
  \V\semp{F_1}\sigma\cap\V\semp{F_2}\sigma,\,\,
  \V\semp{F_1\lor F_2}\sigma=
  \V\semp{F_1}\sigma\cup\V\semp{F_2}\sigma$,
\item
  $\V\semp{\may{a}F}\sigma=\smay{a}\V\semp{F}\sigma, \,
  \V\semp{\must{a}F}\sigma=\smust{a}\V\semp{F}\sigma$. 
\end{enumerate}
\end{definition}
In \cite{Larsen1990} the meaning of the variables in the logic $\V$ is
defined by means of a declaration, or a function
$D:I\fun\V$. Intuitively the syntactic function generates a monotonic
endofunction $\V\semp{D}$ over $\Po(\Proc)^I$ defined by
$(\V\semp{D})(i)=\V\semp{D(i)}$ for all $i\in I$. By Theorem
\ref{tarski}, $\V\semp{D}$ has a unique largest fixed point
$\nu\V\semp{D}\in \Po(\Proc)^I$ that can be used to give the semantics
for the variables and the formulae that contain those in the logic
$\V$.  We can then use this to extend the logic $\M$ with $\{\nu D(i)|
i\in I\}$ as constants  interpreted as
$\{\nu\V\semp{D}(i)|i\in I\}$. By this we get a logic $\M'$ that is
generated by $\Sigma'=(\Sigma_0\cup \{\nu D(i)|i\in I\}, \Sigma_2,
\Sigma_3)$. Then this procedure can be repeated for another
declaration that possibly depends on $\nu D$ as a constant and with
$\M'$ as the basic logic. The following example shows how this
construction works.
\begin{example}
  Let $I=\{1\}$, $\X=\{X_1\}$ and $\Act=\{a,b \}$ and let the property
  ``invariantly $\may{a}\tr$'' be defined as the greatest fixed point
  corresponding to the declaration $D_0$ defined as
  $D_0(1)=\may{a}\tr\land\must{a}X_1\land\must{b}X_1$.  To interpret
  this we define $\M=\M_0$ and $\V_0=\V$ where $\M$ and $\V$ have the
  meaning described above.  The derived semantic function
  $\V_0\sem{D_0}:\Po(\Proc)^{\{1\}}\fun\Po(\Proc)^{\{1\}}$ is easily
  shown to be monotonic and has the greatest fixed point
  $\nu\V_0\sem{D_0}\in\Po(\Proc)^{\{1\}}$. Now we define $\M_1$ as the
  extension of $\M_0$ that is generated by $\Sigma^1=(\{\tr,\ff,\nu D_0(1)\},
  \Sigma_1,\Sigma_2)$, i.e. has $\nu D_0(1)$ as a constant that is
  interpreted as $\nu\V_0\sem{D_0}(1)$, i.e.~$\M_1\sem{\nu
    D_0(1)}=\nu\V_0\sem{D_0}(1)$.

  Next let us assume that we have the declaration
  $D_1:{\{1\}}\fun\V_1$ where $\V_1$ is the variable logic generated
  by $(\{\tr,\ff,\nu D_0(1),X_1\}, \Sigma_2, \Sigma_3)$ and $D_1$ is
  defined as $D_1(1)=\may{b}\nu D_0(1)\land\must{b}X_1$ . As before
  the declaration is interpreted over $\Po(\Proc)^{\{1\}}$ but using
  $\M_1\sem{\,\,\,}$ to interpret the constant $\nu D_0(1)$. Again
  $D_1$ is interpreted by using $\V_1\sem{\,\,\,}$ which leads to a
  monotonic endofunction $\V_1\sem{D_1}$ over $\Po(\Proc)^{\{1\}}$
  with a fixed point $\nu\V_1\sem{D_1}$. The logic $\M_2$ is now
  defined as the one generated by $\Sigma^2=(\{\tr,\ff,\nu D_1(1), \nu
  D_2(1)\}, \Sigma_2,\Sigma_3)$ where $\M_0\sem{\,\,\,}$ and
  $\M_1\sem{\,\,\,}$ are used to define the meaning of $\nu D_1(1)$
  and $\nu D_2(1)$ respectively.
\end{example}
We will now generalize this procedure and define our hierarchy of
nested fixed point logics, derived from a sequence of nested
declarations $D_j, j=1,2,\ldots, N$, i.e. where for each $n<N$,
$D_{n+1}$ is allowed to depend on the constants $\tr,\ff$ and $\nu
D_j(i)$ for $j\leq n$ and $i\in I$. In the definition we assume a
finite index set $I$ and an $I$-indexed variable set $\X$.  We use the
notation $\G(\Sigma_0)$ for the logic generated by
$(\Sigma_0,\Sigma_1,\Sigma_2)$ and $\G_I(\Sigma_0)$ for the logic
generated by $(\Sigma_0\cup\X,\Sigma_1,\Sigma_2)$.

\begin{definition}\quad%Assume that $D_1, D_2,\ldots,D_N$ is nested sequence of 
\begin{itemize}
\item Define 
\begin{itemize}
\item
$\Sigma_0^0=\{\tr,\ff\}$, 
\item $\M_0=\G(\Sigma_0^0)$ and
\item $\V_{0}=\G_I(\Sigma^0_0)$.
\end{itemize}
\item For $n\geq 1$, if $D_{n}:I\fun\V_{n}$, define
\begin{itemize}
 \item $\Sigma^{n+1}_0=\Sigma^{n}_0\cup\{\nu D_n(i)| i\in I \}$,
\item $\M_{n+1}=\G(\Sigma^{n+1}_0)$ and
\item $\V_{n+1}=\G_I(\Sigma^{n+1}_0)$.
\end{itemize}
\end{itemize}
\end{definition}
%\end{itemize}

To define the semantic functions associated with these logics we need the
following lemma.
\begin{lemma}\label{L:logic}\quad
  Assume that $\M=\G(C)$ and $\V=\G_I(C)$ for some set
  of constants $C$ where $\M\sem{c}$ is well defined for
  all $c\in C$.  Then for all $D: I\fun\V$, the derived semantic
  function $\V\sem{D}$ defined by
\[
\forall
  i\in I.(\V\semp{D}\sigma)(i)=\V\semp{D(i)}\sigma
 \]
 is in $\Po(\Proc)^{I}\fun_{mono}\Po(\Proc)^{I}$ and hence, by Theorem
 \ref{tarski}, $\nu\V\sem{D}\in \Po(\Proc)^{I}$
 exists.
\end{lemma}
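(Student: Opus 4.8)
The plan is to prove this lemma by structural induction on the formula $D(i) \in \V = \G_I(C)$, showing simultaneously that for every formula $F \in \V$ the map $\sigma \mapsto \V\semp{F}\sigma$ is a monotone function from $\Po(\Proc)^I$ (ordered pointwise by inclusion) to $\Po(\Proc)$, and that consequently the tupled map $\V\semp{D}$ is monotone on $\Po(\Proc)^I$. Once monotonicity of $\V\semp{D}$ is established, the existence of $\nu\V\semp{D}$ is immediate from Theorem~\ref{tarski}, since $\Po(\Proc)^I$ is a complete lattice (as $\Po(\Proc)$ is, being a powerset lattice, and finite products — indeed arbitrary products — of complete lattices are complete lattices under the pointwise order, as noted in the text).

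First I would set up the induction. The base cases are the constants $c \in C$ and the variables $X_i$. For $c \in C$ we have $\V\semp{c}\sigma = \M\semp{c}$, which does not depend on $\sigma$ at all, hence is trivially monotone; this is exactly where the hypothesis ``$\M\semp{c}$ is well defined for all $c \in C$'' is used, so that the clause makes sense. For a variable $X_i$ we have $\V\semp{X_i}\sigma = \sigma(i)$, i.e. projection onto the $i$-th component, which is monotone because the order on $\Po(\Proc)^I$ is the pointwise order: $\sigma \sqlt \sigma'$ means $\sigma(i) \subseteq \sigma'(i)$ for all $i$. For the inductive step I would treat each operator in $\Sigma_1 \cup \Sigma_2$ in turn: for $\land$ and $\lor$, monotonicity follows because $\cap$ and $\cup$ are monotone as binary operations on $\Po(\Proc)$ and composition/combination of monotone maps is monotone; for $\may{a}$ and $\must{a}$, I would check that the set operators $\smay{a}(\cdot)$ and $\smust{a}(\cdot)$ are monotone on $\Po(\Proc)$ — $\smay{a}$ is obviously monotone from its definition, and $\smust{a}M = \overline{\smay{a}\overline{M}}$ is monotone because complementation is antitone and the composition of two antitone maps with a monotone one is monotone — and then conclude by the induction hypothesis applied to the immediate subformula.

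Finally, from the claim that $\sigma \mapsto \V\semp{F}\sigma$ is monotone for each fixed $F$, I would deduce that $\V\semp{D}$ is monotone: if $\sigma \sqlt \sigma'$ then for every $i \in I$ we have $(\V\semp{D}\sigma)(i) = \V\semp{D(i)}\sigma \subseteq \V\semp{D(i)}\sigma' = (\V\semp{D}\sigma')(i)$, which is precisely $\V\semp{D}\sigma \sqlt \V\semp{D}\sigma'$ in the pointwise order on $\Po(\Proc)^I$. Then Theorem~\ref{tarski} applies. I do not anticipate any serious obstacle here: the only point requiring a little care is the $\must{a}$ case, where one must remember that $\smust{a}$ is built from the antitone complement operator yet is itself monotone; everything else is a routine unfolding of Definition~\ref{def:V} together with the observation that the relevant lattice is complete. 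The argument is essentially the standard proof that the semantics of positive modal formulae is monotone in the valuation, adapted to the pointwise-ordered product lattice $\Po(\Proc)^I$.
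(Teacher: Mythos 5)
Your proof is correct: the structural induction on formulae (constants independent of $\sigma$, variables as monotone projections, $\cap$, $\cup$, $\smay{a}$ monotone, and $\smust{a}M=\overline{\smay{a}\overline{M}}$ monotone as a composition of two antitone maps with a monotone one), followed by the pointwise-order argument and an appeal to Theorem~\ref{tarski}, is exactly the standard argument this lemma rests on. The paper itself states Lemma~\ref{L:logic} without proof, so there is nothing to contrast with; your write-up simply supplies the routine details the authors chose to omit.
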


Now we are ready to define the semantic functions
for $\M_{n}$ and $\V_n$ for all $n\geq 0$.
\begin{definition}\quad
\begin{itemize}
\item $\M_{0}=\M$ and $\V_{0}=\V $ as defined in Definition
  \ref{def:semp} and \ref{def:V} respectively.

\item For $n\geq 0$ the semantic functions for $\M_{n+1}$ is  defined as follows:
\begin{enumerate}
\item 
$\M_{n+1}\semp{F}=\M_{n}\semp{F}$ if $F\in\Sigma^n_0$,
\item
$\M_{n+1}\semp{(\nu D_{n})(i)}=\nu \V_n\semp{D_{n}}(i)$ for $i\in I$,
\item
  $\M_{n+1}\semp{F_1\land F_2}= \M_{n+1}\semp{F_1}\cap\M_{n+1}\semp{F_2},\,
  \M_{n+1}\semp{F_1\lor F_2}= \M_{n+1}\semp{F_1}\cup\M_{n+1}\semp{F_2}$,
\item
  $\M_{n+1}\semp{\may{a}F}=\smay{a}\M_{n+1}\semp{F}, \,
  \M_{n+1}\semp{\must{a}F}=\smust{a}\M_{n+1}\semp{F}$.
\end{enumerate}
\item For $n\geq 0$ the semantic function for $\V_{n+1}$ is defined as follows:
\begin{enumerate}
\item $\V_{n+1}\semp{F}\sigma=\M_{n+1}\semp{F}$ if $F\in\Sigma^0_n$,
\item 
  $\V_{n+1}\semp{X_i}\sigma=\sigma(i)$, $i\in I$,
\item $\V_{n+1}\semp{F_1\land F_2}\sigma=
  \V_{n+1}\semp{F_1}\sigma\cap\V_{n+1}\semp{F_2}\sigma,\,\,
  \V_{n+1}\semp{F_1\lor F_2}\sigma=
  \V_{n+1}\semp{F_1}\sigma\cup\V_{n+1}\semp{F_2}\sigma$,
\item
  $\V_{n+1}\semp{\may{a}F}\sigma=\smay{a}\V_{n+1}\semp{F}\sigma, \,
  \V_{n+1}\semp{\must{a}F}\sigma=\smust{a}\V_{n+1}\semp{F}\sigma$. 
\end{enumerate}
\end{itemize}

\end{definition}
\iffalse
\begin{definition}\quad
  Given a finite index set $I$ and a sequence of nested declarations
  $D_1,D_2,\ldots D_k $ we define
\begin{itemize}
  
\item $\M(I;{D_1,\ldots, D_k})=\bigcup_{n\leq k}\M_n$ and 
\item $\M(I;{D_1,\ldots,D_k})\semp{F}=\M_n\semp{F}$ if $F\in\M_n$ for
  $n\leq k$.
\end{itemize}
Sometimes we write $M^k$ in stead of $\M(I;{D_1,\ldots,D_k})$ if the
meaning is clear from the context.
\end{definition}\fi

%%%%%%%%%%%%%%%%%%%%%%%%%%%%%%%%%%%%%%%%%
%%%%%%%%%%%%%%%%%%%%%%%%%%%%%%%%%%%%%%%%%
\subsection{Characteristic Formulae by means of
  Declarations}\label{sect:chargfp}
%%%%%%%%%%%%%%%%%%%%%%%%%%%%%%%%%%%%%%%%
%%%%%%%%%%%%%%%%%%%%%%%%%%%%%%%%%%%%%%%%
The aim of this section is to show how each process $p\in\Proc$
can be characterized up to a binary relation $\bowtie$ over processes (such
as an equivalence or a preorder) by a single formula, the so called
characteristic formula for $p$ up to $\bowtie$.  

To achieve this, we take $I=\Proc$ in the definitions in the
previous section.  A declaration $D$ for a variable logic $\V$
assigns exactly one formula $D(p)$ from $\V$ to each process
$p\in\Proc$.  We have seen that each such function induces an
endofunction
$\V\semp{D}\in\Po(\Proc)^\Proc\fun_{mono}\Po(\Proc)^\Proc$ and
therefore $\V\semp{D}$ exists. This leads to the following
definition:

\begin{definition}\label{D:char}
  A declaration $D$ for the logic $\V$ characterizes
  $\bowtie\subseteq \Proc\times \Proc$ iff for each $p,q\in \Proc$,
$$
(p,q)\in\bowtie\text{ iff }q \in (\nu\V\semp{D})(p).
$$
\end{definition}
\iffalse In our examples, we seek declarations that characterize a
relation, such as similarity or bisimilarity, which is of the form
$\Fix \F$, where $\F$ is a monotone endofunction on $\Po(\Proc \times
\Proc)$. \fi In what follows, we will describe how we can devise a
characterizing declaration for a relation that is obtained as a fixed
point, or a sequence of nested fixed points of monotone
endofunctions, which can be expressed in the logic.  In order to define
this precisely we use the notation introduced in Definition
\ref{Def:sigmaS} below.
\begin{definition}\label{Def:sigmaS}
If $S\subseteq\Proc\times\Proc$ we define the variable interpretation
$\sigma_S\in\Po(\Proc)^{\Proc}$ associated to $S$ by
\[
\sigma_S(p)=\{q\in\Proc \mid (p,q)\in S\}, \textrm{ for each $p\in \Proc$}.
\]
\end{definition}
Thus $\sigma_S$ assigns to $p$ all those processes $q$ that are related to it
via $S$.
\begin{definition} \label{def:express} A declaration $D$ for $\V$
  \emph{expresses} a monotone endofunction $\F$ on $\Po(\Proc\times
  \Proc)$ when
\[
(p,q)\in\F(S) \text{ iff }q\in\V\semp{D(p)}\sigma_S=(\V\semp{D}\sigma_S)(p) ,
\]
   for every relation $S \subseteq \Proc \times \Proc$ and every
   $p, q  \in \Proc$.
\end{definition}
We need the following to prove our main result.
\begin{definition}\label{def:isophi}
Let $\Phi:\Po(\Proc\times \Proc)\fun\Po(\Proc)^{\Proc}$ be defined by
$\Phi(S)=\sigma_S$. 
\end{definition}
\begin{lemma}\label{L:x}\quad
\begin{itemize}
\item
 $\Phi:\Po(\Proc\times \Proc)\fun\Po(\Proc)^{\Proc}$ is an
 isomorphism.
\item If $A_1,A_2\in\Po(\Proc\times \Proc)$ and
  $\F_1,\F_2\in\Po(\Proc\times \Proc)\fun_{mono}\Po(\Proc\times
  \Proc)$ then 
\begin{itemize}
\item
$\Phi(A_1\cap A_2)=\Phi(A_1)\cap\Phi(A_2)$, 
\item
$\Phi^*(\F_1\cap
  A_1)=\Phi^*(\F_1)\cap\Phi(A_1)$ and
\item $\Phi^*(\F_1\cap\F_2)=\Phi^*(\F_1)\cap\Phi^*(\F_2)$.
\end{itemize}
\end{itemize}
 \end{lemma}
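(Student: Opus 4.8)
The plan is to establish the isomorphism first and then derive the three intersection identities, each of which will follow by unfolding the definition of $\Phi$ together with Definitions \ref{Def:sigmaS} and the operator $\phi^*$ from the opening section. For the first bullet, I would exhibit the inverse explicitly: define $\Psi:\Po(\Proc)^{\Proc}\fun\Po(\Proc\times\Proc)$ by $\Psi(\sigma)=\{(p,q)\mid q\in\sigma(p)\}$. Then $\Psi(\Phi(S))=S$ and $\Phi(\Psi(\sigma))=\sigma$ are immediate from the definitions, so $\Phi$ is a bijection. Monotonicity of $\Phi$ in both directions is equally routine: $S_1\subseteq S_2$ holds iff for every $p$ we have $\{q\mid(p,q)\in S_1\}\subseteq\{q\mid(p,q)\in S_2\}$, i.e.\ iff $\Phi(S_1)\sqlt\Phi(S_2)$ in the pointwise order on $\Po(\Proc)^{\Proc}$; the same equivalence run backwards handles $\Psi=\Phi^{-1}$. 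Hence $\Phi$ is an isomorphism of complete lattices.

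For the second bullet, the first identity $\Phi(A_1\cap A_2)=\Phi(A_1)\cap\Phi(A_2)$ is a pointwise set-theoretic computation: for each $p$, $\sigma_{A_1\cap A_2}(p)=\{q\mid(p,q)\in A_1\cap A_2\}=\{q\mid(p,q)\in A_1\}\cap\{q\mid(p,q)\in A_2\}=(\sigma_{A_1}\cap\sigma_{A_2})(p)$, using that meet in $\Po(\Proc)^{\Proc}$ is computed componentwise. The third identity $\Phi^*(\F_1\cap\F_2)=\Phi^*(\F_1)\cap\Phi^*(\F_2)$ then reduces to the first: unfolding $\Phi^*\G=\Phi\circ\G\circ\Phi^{-1}$, we get for any $\sigma$,
\[
\Phi^*(\F_1\cap\F_2)(\sigma)=\Phi\big((\F_1\cap\F_2)(\Phi^{-1}\sigma)\big)=\Phi\big(\F_1(\Phi^{-1}\sigma)\cap\F_2(\Phi^{-1}\sigma)\big),
\]
which by the first identity equals $\Phi(\F_1(\Phi^{-1}\sigma))\cap\Phi(\F_2(\Phi^{-1}\sigma))=\Phi^*(\F_1)(\sigma)\cap\Phi^*(\F_2)(\sigma)$. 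The middle identity $\Phi^*(\F_1\cap A_1)=\Phi^*(\F_1)\cap\Phi(A_1)$ is analogous, recalling from the Definition preceding Lemma \ref{L:inverse} that $\F_1\cap A_1$ denotes the function $S\mapsto\F_1(S)\cap A_1$: for any $\sigma$, $\Phi^*(\F_1\cap A_1)(\sigma)=\Phi(\F_1(\Phi^{-1}\sigma)\cap A_1)=\Phi(\F_1(\Phi^{-1}\sigma))\cap\Phi(A_1)=(\Phi^*(\F_1)\cap\Phi(A_1))(\sigma)$, where the last expression uses the evident pointwise meaning of $\cap$ on the function-space side.

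I do not anticipate a genuine obstacle here; the lemma is a packaging result whose content is entirely bookkeeping. The only point requiring a little care is keeping the three overloaded uses of $\cap$ straight — intersection of relations, pointwise intersection of elements of $\Po(\Proc)^{\Proc}$, and the "$\F\cap A$" and "$\F_1\cap\F_2$" function combinators — and making sure that in each identity the $\cap$ on the left and the $\cap$s on the right are interpreted in the correct lattice. Once that is fixed, every step is a one-line unfolding, and the fact that $\Phi$ is a lattice isomorphism (first bullet) guarantees it commutes with all meets, which is really the single underlying reason all three identities hold.
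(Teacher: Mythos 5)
Your proposal is correct and follows essentially the same route as the paper, which simply cites \cite{AILS2011} for the isomorphism claim and observes that the three intersection identities ``follow directly from the definition of $\Phi$''; you have merely filled in the routine unfoldings (explicit inverse, pointwise monotonicity, and the componentwise computation of the meets). No discrepancy to report.
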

\begin{proof}
  The first part is proved in \cite{AILS2011} whereas the second part
  follows directly from the definition of $\Phi$.
\end{proof}
\begin{corollary}\label{L:Phiiso}
  Assume that $D\in\Proc\fun\V$ and $\F\in \Po(\Proc\times
  \Proc)\fun_{mono}\Po(\Proc\times \Proc)$. Then 
\[
D \mbox{ expresses }\F\,\mbox{ iff }\,\Phi^*(\F)=\V\semp{D}
\,\mbox{ iff }\,D \mbox{ characterizes }\nu\F.
\]
\iffalse
\begin{itemize}
\item $D$ \emph{expresses}  $\F$ iff
  $\Phi^*(\F)=\V\semp{D}$, and
\item $D$ \emph{characterizes}  $\nu\F$ iff $D$ \emph{expresses}  $\F$ .
\end{itemize}\fi
 \end{corollary}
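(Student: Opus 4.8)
The plan is to prove the two biconditionals separately, chaining through the middle condition $\Phi^*(\F)=\V\sem{D}$. For the first equivalence, I would start by unfolding the definition of "$D$ expresses $\F$" (Definition~\ref{def:express}): for every $S\subseteq\Proc\times\Proc$ and every $p,q\in\Proc$, we have $(p,q)\in\F(S)$ iff $q\in(\V\sem{D}\sigma_S)(p)$. The right-hand side of the target, $\Phi^*(\F)=\V\sem{D}$, is an equality of functions $\Po(\Proc)^\Proc\fun\Po(\Proc)^\Proc$, so by extensionality it holds iff $\Phi^*(\F)(\sigma)=\V\sem{D}(\sigma)$ for every $\sigma\in\Po(\Proc)^\Proc$. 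Since $\Phi$ is an isomorphism (Lemma~\ref{L:x}), every $\sigma$ is of the form $\sigma_S=\Phi(S)$ for a unique $S=\Phi^{-1}(\sigma)$, so this is equivalent to $\Phi^*(\F)(\sigma_S)=\V\sem{D}(\sigma_S)$ for all $S$. Now unfold $\Phi^*(\F)=\Phi\circ\F\circ\Phi^{-1}$: we get $\Phi^*(\F)(\sigma_S)=\Phi(\F(S))=\sigma_{\F(S)}$. Comparing the two sides pointwise at $p$ and then membership-wise at $q$, the equality $\sigma_{\F(S)}(p)=(\V\sem{D}\sigma_S)(p)$ for all $p$ unpacks, via Definition~\ref{Def:sigmaS}, to exactly: $q\in\sigma_{\F(S)}(p) \iff (p,q)\in\F(S)$, which matches the "expresses" condition. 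So the first "iff" is just a matter of carefully reading both sides through the definitions of $\Phi$, $\Phi^*$, and $\sigma_S$.

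For the second equivalence, $\Phi^*(\F)=\V\sem{D}$ iff $D$ characterizes $\nu\F$, I would invoke Theorem~\ref{thm:isomorph} with $A=\Po(\Proc\times\Proc)$, $B=\Po(\Proc)^\Proc$, $f=\F$, and $\phi=\Phi$. That theorem gives $\Phi(\nu\F)=\nu(\Phi^*\F)$ (both exist by Tarski's theorem, since both lattices are complete). Assuming $\Phi^*(\F)=\V\sem{D}$, we get $\Phi(\nu\F)=\nu\V\sem{D}$, i.e. $\sigma_{\nu\F}=\nu\V\sem{D}$; reading this pointwise at $p$ and membership-wise at $q$ yields $(p,q)\in\nu\F \iff q\in(\nu\V\sem{D})(p)$, which is Definition~\ref{D:char} of "$D$ characterizes $\nu\F$". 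For the converse direction of this second "iff" — which is the one place that needs a little thought — note that if $D$ characterizes $\nu\F$ then by the already-established first equivalence it would suffice to know that $D$ expresses $\F$; but we cannot assume that yet. Instead I would argue directly: characterizing $\nu\F$ gives $\sigma_{\nu\F}=\nu\V\sem{D}$, hence $\Phi(\nu\F)=\nu\V\sem{D}=\nu(\Phi^*\F)$ by Theorem~\ref{thm:isomorph} again. Now $\Phi^*\F$ is a monotone endofunction on $\Po(\Proc)^\Proc$ whose greatest fixed point equals that of $\V\sem{D}$ — but this alone does not force $\Phi^*\F=\V\sem{D}$ as functions.

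The main obstacle is therefore exactly this last implication. I expect the intended argument is that the three conditions are not meant to be proved in a cycle but rather that "expresses" is the genuinely primitive notion, with the corollary really asserting (expresses $\iff$ $\Phi^*\F=\V\sem{D}$) as the substantive equivalence and ($\Phi^*\F=\V\sem{D}\implies$ characterizes $\nu\F$) as a one-directional consequence, the reverse being obtained only because in the applications $\F$ is fixed first and $D$ is built to express it. If a genuine three-way equivalence is wanted, one must restrict attention to declarations $D$ that arise as $\Phi^*\G$ for \emph{some} monotone $\G$ (equivalently, $\V\sem{D}$ is in the image of $\Phi^*$, which since $\Phi^*$ is an isomorphism of function spaces is automatic) — and then "$D$ characterizes $\nu\F$" together with injectivity considerations pins down $\G=\F$ only if $\nu$ is injective on the relevant functions, which it is not in general. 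So in the write-up I would present the robust chain "expresses $\iff$ $\Phi^*\F=\V\sem{D}$ $\implies$ characterizes $\nu\F$" carefully, and for the remaining direction either (a) appeal to an implicit standing assumption that every declaration under consideration expresses some function, closing the loop via the first equivalence, or (b) simply note that in context $\F$ is given and $D$ is chosen, so "characterizes $\nu\F$" is always established by first checking "expresses $\F$". I would flag this in the proof rather than hide it.
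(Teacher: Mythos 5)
The paper gives no explicit proof of this corollary at all --- it is presented as an immediate consequence of Lemma~\ref{L:x} and Theorem~\ref{thm:isomorph} --- and the two substantive derivations in your proposal are exactly what that reduction amounts to: the equivalence of ``expresses'' with $\Phi^*(\F)=\V\sem{D}$ by unfolding $\Phi$, $\Phi^*$ and $\sigma_S$, and the implication from $\Phi^*(\F)=\V\sem{D}$ to ``characterizes'' via $\Phi(\nu\F)=\nu(\Phi^*\F)$. You are also right to be suspicious of the remaining implication: as literally stated, the passage from ``$D$ characterizes $\nu\F$'' back to $\Phi^*(\F)=\V\sem{D}$ is false, because characterization only asserts that the monotone maps $\Phi^*(\F)$ and $\V\sem{D}$ share a greatest fixed point, and distinct monotone endofunctions on a complete lattice can do that. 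A concrete counterexample within this very setting: take $D(p)=X_p$ for every $p\in\Proc$, so that $\V\sem{D}$ is the identity on $\Po(\Proc)^{\Proc}$ and $\nu\V\sem{D}$ is the top element, and take $\F$ to be the constant map $S\mapsto\Proc\times\Proc$; then $D$ characterizes $\nu\F=\Proc\times\Proc$, but $D$ expresses $\F$ would force $\F(S)=S$ for all $S$, which fails. Note that the corollary is only ever invoked later in the paper in the direction you do establish (from $\Phi^*(\F)=\V\sem{D}$ to characterization, in the application lemmas and the final theorem), so your decision to present the chain as ``expresses iff $\Phi^*(\F)=\V\sem{D}$, which implies characterizes'' and to flag the unprovable converse rather than paper over it is the correct resolution: the defect is in the statement, not in your argument. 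The only refinement I would suggest is to replace your somewhat speculative closing discussion of how one might rescue the third biconditional with the explicit counterexample above, which settles the matter.
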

\section{Applications}

Following the approach in \cite{AILS2011}, we define declarations $D$
and $\tilde{D}$ that express the functions $\F$ and $\tilde{\F}$ that
were defined in Section \ref{sec:LTS}. 
\begin{definition}Let
\begin{itemize}
\item Let $D: p\mapsto\bigwedge_{a \in \Act} \bigwedge_{p' \in
    \Proc.\, p \der{a} p'}\may{a}X_{p'} $ and 
\item $\tilde{D}:
  p\mapsto\bigwedge_{a\in \Act} \must{a}\bigvee_{p' \in \Proc.\, p
    \der{a} p'}X_{p'} $.
\end{itemize}
\end{definition}
\newpage
From \cite{AILS2011} we have:
\begin{lemma}\label{lemma:D-F}\quad
\begin{itemize}
\item
  $D$ expresses $\F$ and characterizes $\nu\F$,  and 
\item
$\tilde{D}$
  expresses $\tilde{\F}$ and characterizes $\nu\tilde{\F}$.
\end{itemize}
\end{lemma}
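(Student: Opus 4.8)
The plan is to prove Lemma~\ref{lemma:D-F} by a direct unfolding of the definitions of ``expresses'' (Definition~\ref{def:express}) and of the semantic function $\V\sem{\,\,\,}$ (Definition~\ref{def:V}), and then to invoke Corollary~\ref{L:Phiiso} to obtain the ``characterizes'' halves for free. Concretely, for the first bullet it suffices to show that for every $S\subseteq\Proc\times\Proc$ and all $p,q\in\Proc$,
\[
(p,q)\in\F(S)\iff q\in\V\sem{D(p)}\sigma_S .
\]
Once this is established, Corollary~\ref{L:Phiiso} immediately gives that $D$ also characterizes $\nu\F$; the second bullet then follows from the first together with Lemma~\ref{L:inverse} (which gives $\nu\tilde{\F}=(\nu\F)^{-1}$) and the observation that $\tilde D$ is obtained from $D$ by the syntactic dualization matching the semantic operation $\F\mapsto\tilde\F$.

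First I would compute the right-hand side of the displayed equivalence explicitly. By Definition~\ref{def:V}, $\V\sem{D(p)}\sigma_S = \bigcap_{a\in\Act}\bigcap_{p'\,:\,p\der{a}p'}\smay{a}\sigma_S(p')$, and by the definition of $\smay{a}$ we have $q\in\smay{a}\sigma_S(p')$ iff there is $q'$ with $q\der{a}q'$ and $q'\in\sigma_S(p')$, i.e.\ $(p',q')\in S$. Hence $q\in\V\sem{D(p)}\sigma_S$ iff for every $a\in\Act$ and every $p'$ with $p\der{a}p'$ there exists $q'$ with $q\der{a}q'$ and $(p',q')\in S$ --- which is verbatim the defining condition for $(p,q)\in\F(S)$ from the definition of $\F$ in Section~\ref{sec:LTS}. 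This settles the first bullet; the only mildly delicate point is the degenerate case where $p$ has no outgoing $a$-transitions, where the inner conjunction/intersection is empty and evaluates to $\Proc$, matching the vacuously-true quantifier --- worth a one-line remark but no real obstacle.

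For the second bullet, I would argue that $\tilde D$ expresses $\tilde\F$ by reducing to the first bullet via the inverse operation. Using the definition $\tilde\F(S)=(\F(S^{-1}))^{-1}$, we have $(p,q)\in\tilde\F(S)$ iff $(q,p)\in\F(S^{-1})$ iff (by the first bullet) $p\in\V\sem{D(q)}\sigma_{S^{-1}}$. So it remains to check the purely syntactic-semantic identity that $q\in\V\sem{\tilde D(p)}\sigma_S$ iff $p\in\V\sem{D(q)}\sigma_{S^{-1}}$, which is a short computation: $\V\sem{\tilde D(p)}\sigma_S=\bigcap_{a\in\Act}\smust{a}\bigl(\bigcup_{p'\,:\,p\der{a}p'}\sigma_S(p')\bigr)$, and unwinding $\smust{a}M=\overline{\smay{a}\overline M}$ shows $q$ lies in this set iff for all $a$ and all $q'$ with $q\der{a}q'$ there is $p'$ with $p\der{a}p'$ and $(p',q')\in S$, i.e.\ $(q,p')\in S^{-1}$ --- which is exactly $p\in\V\sem{D(q)}\sigma_{S^{-1}}$ spelled out. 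Then Corollary~\ref{L:Phiiso} (or alternatively Lemma~\ref{L:inverse} applied to $\nu\F$) yields that $\tilde D$ characterizes $\nu\tilde\F$.

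The main obstacle, such as it is, is bookkeeping rather than mathematical depth: one must be careful about the direction of the pair $(p,q)$ versus $(q,p)$ when passing through the $\widetilde{(\cdot)}$ operation and through $\sigma_S$ versus $\sigma_{S^{-1}}$, since $\sigma_S$ is inherently asymmetric in its two arguments, and one must handle the De~Morgan dualization of $\smust{a}$ against the existential in $\smay{a}$ correctly (the universally-quantified $q'$ ranges over \emph{all} $a$-successors of $q$, not just those related by $S$). Since Lemma~\ref{lemma:D-F} is explicitly attributed to \cite{AILS2011}, I expect the paper to simply cite that reference; the sketch above is the routine verification that underlies it.
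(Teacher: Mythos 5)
Your proof is correct: the computation of $\V\sem{D(p)}\sigma_S$ and $\V\sem{\tilde D(p)}\sigma_S$, the bookkeeping through $S^{-1}$ and $\sigma_{S^{-1}}$, and the appeal to Corollary~\ref{L:Phiiso} for the ``characterizes'' halves all check out. As you anticipated, the paper gives no proof of this lemma at all --- it simply imports it from \cite{AILS2011} --- so your sketch is exactly the routine verification being elided.
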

Now we recall from \cite{AILS2011} the declarations that characterize simulation
equivalence and bisimulation equivalence.
\begin{definition}
Define
%\begin{itemize}
$D_{bisim}=D_{sim}\land D_{opsim}$ and
$D_{simeq}=\nu D_{sim}\land \nu D_{opsim}$.
%\end{itemize}
\end{definition}
\begin{lemma}\quad
 $D_{bisim}$ characterizes $\sim_{bisim}$ and $D_{simeq}$ characterizes $\sim_{sim}$. 
\iffalse
\begin{itemize}
\item $D_{bisim}$ characterizes the bisimulation equivalence $\sim_{bisim}$ and
\item $D_{simeq}$ characterizes the simulation equivalence
  $\sim_{sim}$.
\end{itemize}
\fi
\end{lemma}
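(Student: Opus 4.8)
The plan is to reduce the claim to the already-established machinery about the isomorphism $\Phi$ and about declarations expressing monotone functions, rather than arguing directly about the semantics of the logics $\M_n$. Concretely, I would first handle $D_{bisim}$. By Lemma~\ref{lemma:D-F}, $D_{sim}=D$ expresses $\F_{sim}=\F$ and $D_{opsim}=\tilde D$ expresses $\F_{opsim}=\tilde\F$; by Corollary~\ref{L:Phiiso} this means $\Phi^*(\F_{sim})=\V\semp{D_{sim}}$ and $\Phi^*(\F_{opsim})=\V\semp{D_{opsim}}$. The key observation is that the semantics of $\land$ is pointwise intersection, so $\V\semp{D_{sim}\land D_{opsim}}=\V\semp{D_{sim}}\cap\V\semp{D_{opsim}}$ in the pointwise order on $\Po(\Proc)^{\Proc}$. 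Combining this with the third item of Lemma~\ref{L:x}, $\Phi^*(\F_{sim}\cap\F_{opsim})=\Phi^*(\F_{sim})\cap\Phi^*(\F_{opsim})=\V\semp{D_{sim}}\cap\V\semp{D_{opsim}}=\V\semp{D_{bisim}}$, so by the other direction of Corollary~\ref{L:Phiiso}, $D_{bisim}=D_{sim}\land D_{opsim}$ characterizes $\nu(\F_{sim}\cap\F_{opsim})=\nu\F_{bisim}=\sim_{bisim}$.

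For $D_{simeq}=\nu D_{sim}\land\nu D_{opsim}$ the situation is slightly different, because here the fixed points have already been taken and the formula lives one level up in the hierarchy (its two conjuncts are the constants $\nu D_{sim}(p)$ and $\nu D_{opsim}(p)$, interpreted via $\M_1\semp{\,\cdot\,}$). I would first record, from the definition of the semantic function $\M_{n+1}$, that $\M\semp{\nu D_{sim}(p)}=\nu\V\semp{D_{sim}}(p)$, and that by Corollary~\ref{L:Phiiso} together with $\Phi=\Phi\circ\nu$-compatibility (Theorem~\ref{thm:isomorph} applied to $\Phi$ and $\F_{sim}$), $\nu\V\semp{D_{sim}}=\nu\Phi^*(\F_{sim})=\Phi(\nu\F_{sim})=\sigma_{\sqlt_{sim}}$; likewise $\nu\V\semp{D_{opsim}}=\sigma_{\sqlt_{opsim}}$. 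Hence for the constant declaration $D_{simeq}$ we compute, for each $p,q$, that $q\in\M\semp{\nu D_{sim}(p)\land\nu D_{opsim}(p)}$ iff $q\in\sigma_{\sqlt_{sim}}(p)\cap\sigma_{\sqlt_{opsim}}(p)$ iff $(p,q)\in\sqlt_{sim}\cap\sqlt_{opsim}=\sim_{sim}$. Since $D_{simeq}$ has no free variables, the declaration is constant and its greatest fixed point at $p$ is exactly this set, giving that $D_{simeq}$ characterizes $\sim_{sim}$ in the sense of Definition~\ref{D:char}.

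I expect the main obstacle to be bookkeeping at the boundary between the two logical levels rather than any genuine mathematical difficulty: one must be careful that $D_{simeq}$ is a declaration for a logic $\V_1$ over the extended signature $\Sigma^1_0$ (so that $\nu D_{sim}(p)$ is a legal constant), and that ``characterizes'' for a variable-free declaration collapses to the statement about $\M_1\semp{\,\cdot\,}$ that I used above. A minor subtlety is that $D_{sim}$ and $D_{opsim}$ must be thought of as declarations over a common index set $I=\Proc$ and a common starting logic so that the conjunction $D_{sim}\land D_{opsim}$ is well formed pointwise; once that is fixed, the argument is just: the semantic bracket turns $\land$ into $\cap$, $\Phi^*$ turns $\cap$ into $\cap$, and Theorem~\ref{thm:isomorph} turns $\nu$ through $\Phi$ into $\Phi$ through $\nu$. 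All three facts are already available, so the proof is essentially the assembly above.
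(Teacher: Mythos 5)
Your proposal is correct and follows essentially the same route as the paper: for $D_{bisim}$ you compute $\Phi^*(\F_{bisim})=\V\semp{D_{bisim}}$ via Lemma~\ref{L:x} and conclude by Corollary~\ref{L:Phiiso}, and for $D_{simeq}$ you push $\Phi$ through $\nu$ with Theorem~\ref{thm:isomorph} exactly as the paper does, differing only in that you verify Definition~\ref{D:char} pointwise for the constant declaration instead of invoking Corollary~\ref{L:Phiiso} a second time. The bookkeeping caveats you flag (working in $\V_1$ over $\Sigma^1_0$, common index set $I=\Proc$) match the setup the paper makes explicit.
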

\begin{proof}
%\begin{itemize}
%\item 
$D_{bisim}$ does not contain nested fixed points and can
  therefore be interpreted directly over $\V_0=\V$. 
\iffalse
Now assume that
  $S\subseteq\Proc\times\Proc$. Then we get:
\[
\begin{array}{l}
q\in(\V\sem{D_{bisim}}\sigma_S)(p)=
(\V\sem{D_{sim}}\sigma_S)(p)\cap(\V\sem{D_{opsim}}\sigma_S)(p)\ioi
(p,q)\in\F_{sim}(S)\cap\F_{opsim}(S)=\F_{bisim}(S).
\end{array}
\]
This shows that $D_{bisim}$ expresses $\F_{bisim}$ and, by
Cor.~\ref{L:Phiiso}, $D_{bisim}$ characterizes
$\nu\F_{bisim}=\sim_{bisim}$.
%\item 
\fi
Now we proceed as follows:
\[
\Phi^*(\F_{bisim})=\Phi^*(\F_{sim})\cap\Phi^*(\F_{opsim})=
\V\sem{D_{sim}}\cap\V\sem{D_{opsim}}=\V\sem{D_{sim}\land D_{opsim}}=\V\sem{D_{bisim}}.
\]
To interpret $D_{simeq}$ we define $\Sigma_1=\{\tr,\ff\}\cup\{\nu
D_{sim}(p)|p\in\Proc\}$ and $\Sigma_2=\Sigma_1\cup\{\nu
D_{opsim}(p)|p\in\Proc\}$ and let $\M_0,\M_1,\M_2$ and
$\V_0,\V_1$ be defined as before. Then $D_{simeq}:\Proc\fun\V_1$.
If we let $\F_{simeq}=\nu\F_{sim}\cap\nu\F_{opsim}$, we get 
\[
\begin{array}{l}
\Phi^*(\F_{simeq})=\Phi(\nu\F_{sim})\cap\Phi(\nu\F_{opsim)}=
\nu\V_1\sem{D_{sim}}\cap\nu\V_1\sem{D_{opsim}}=\\\\
\M_2\sem{\nu D_{sim}}\cap\M_2\sem{\nu D_{opsim}}=
\M_2\sem{\nu D_{sim}\land\nu D_{opsim}}=\V_1\sem{D_{simeq}}.
\end{array}
\]
The result now follows from Cor.~\ref{L:Phiiso}.
\iffalse
 Next assume that
$\sigma\in\Po(\Proc)^{\Proc}$ and that $p\in\Proc$.  Then we get
\[
%\begin{array}{l}
 q\in(\V_2\sem{D_{simeq}}\sigma)(p)=\M_2\sem{\nu D_{sim}}(p)
\cap\M_2\sem{\nu D_{opsim}}(p)\ioi p\sqlt_{sim}q\mbox{ and }p\sqlt_{opsim}q\ioi p\sim_{sim}q.
%\end{array}
\]
As $\V_2\sem{D}$ is a constant function, i.~e.~independent of
$\sigma$, we get that $\nu\V_2\sem{D}=\{q|p\sim_{sim}q\}$ or equivalently 
$q\in(\nu\V_2\sem{D})(p)\ioi p\sim_{sim}q$.
%\end{itemize}
\fi
\end{proof}

Next we define the nested simulation preorders introduced in
\cite{GV92} by using the function $\F$. These definition involve
nesting of fixed points and are defined recursively on the depth of
the nesting.  The $1$-nested simulation $\sqlt_{(1)sim}$ is just the
simulation preorder $\sqlt_{sim}$ as defined in Section \ref{sec:LTS}
and the function $\F_{(1)sim}$ is therefore the function $\F$. As the
preorder $\sqlt_{(n+1)sim}$ depends on the inverse of the preorder
$\sqlt_{(n)sim}$, which we call $\sqlt_{(n)opsim}$, we simultaneously
define the nested simulations and their inverse in our recursive
definition.  The functions that define $\sqlt_{(n)sim}$ and
$\sqlt_{(n)opsim}$ are called $\F_{(n)sim}$ and $\F_{(n)opsim}$
respectively.

\begin{definition}[Nested simulations]\quad
\begin{enumerate}
\item $\F_{(1)sim}=\F$ and $\sqlt_{(1)sim}=\nu\F_{(1)sim}$, 
\item ${\F_{(1)opsim}}=\widetilde{\F}$ and
    $\sqlt_{(1)opsim}=\nu{\F_{(1)opsim}}$,
\item $\F_{(n+1)sim}=\F_{(1)sim}\cap\nu{\F_{(n)opsim}}$ and 
  $\sqlt_{(n+1)sim}=\nu\F_{(n+1)sim}$.
\item ${\F_{(n+1)opsim}}={\F_{(1)opsim}}\cap\nu\F_{(n)sim}$ and 
  $\sqlt_{(n+1)opsim}=\nu{\F_{(n+1)opsim}}$.
\end{enumerate} 
\end{definition}
We complete this note by defining a sequence of nested declarations
and prove that they characterize the sequence of $n$-nested
simulation preorders.
\begin{theorem}\quad
\begin{enumerate}
\item $D_{(1)sim}=D$\,  expresses\, $\F_{(1)sim}$\, and\, characterizes\,
    $\sqlt_{(1)sim}$,
  \item ${D_{(1)opsim}}=\widetilde{D}$\, expresses
    $\F_{(1)opsim}$\, and\, characterizes\,
    $\sqlt_{(1)opsim}$,
  \item $D_{(n+1)sim}=D_{(1)sim}\land\nu D_{(n)opsim}$\, expresses
    $\F_{(n+1)sim}$\, and\, characterizes\, $\sqlt_{(n+1)sim}$,
  \item ${D_{(n+1)opsim}}=D_{(1)opsim}\land\nu D_{(n)sim}$\,
    expresses\, ${\F_{(n+1)opsim}}$\, an\,d characterizes\,
    $\sqlt_{(n+1)opsim}$.
\end{enumerate}
\end{theorem}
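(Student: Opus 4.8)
The plan is to prove all four statements simultaneously by induction on $n$, exploiting the unifying equivalence from Corollary~\ref{L:Phiiso}: a declaration $D$ expresses $\F$ iff $\Phi^*(\F)=\V\semp{D}$ iff $D$ characterizes $\nu\F$. Thus in each clause it suffices to establish the middle equality $\Phi^*(\F_\bullet)=\V\semp{D_\bullet}$ in the appropriate logic layer, and the ``expresses'' and ``characterizes'' halves come for free. The base cases (1) and (2) are exactly Lemma~\ref{lemma:D-F}, so no work is needed there beyond citing it.

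For the inductive step I would assume the statement holds for $n$ — in particular $\Phi^*(\F_{(n)opsim})=\V_{m}\semp{D_{(n)opsim}}$ and $\Phi^*(\F_{(n)sim})=\V_{m}\semp{D_{(n)sim}}$ in the relevant variable logic $\V_m$, so that by Lemma~\ref{L:logic}/Theorem~\ref{tarski} the fixed points $\nu\V_m\semp{D_{(n)opsim}}$ and $\nu\V_m\semp{D_{(n)sim}}$ exist and, by the semantic clause for constants, become the interpretations $\M_{m+1}\semp{\nu D_{(n)opsim}(p)}$ and $\M_{m+1}\semp{\nu D_{(n)sim}(p)}$ in the next layer. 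Then for clause (3) I compute, mirroring the $D_{simeq}$ proof in the excerpt:
\[
\Phi^*(\F_{(n+1)sim})=\Phi^*(\F_{(1)sim}\cap\nu\F_{(n)opsim})=\Phi^*(\F_{(1)sim})\cap\Phi(\nu\F_{(n)opsim}),
\]
using the second bullet of Lemma~\ref{L:x}. The first conjunct is $\V\semp{D}=\V\semp{D_{(1)sim}}$ by Lemma~\ref{lemma:D-F}; for the second, Theorem~\ref{thm:isomorph} gives $\Phi(\nu\F_{(n)opsim})=\nu(\Phi^*\F_{(n)opsim})=\nu\V_m\semp{D_{(n)opsim}}$, which is the constant-function interpretation of $\nu D_{(n)opsim}$. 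Hence $\Phi^*(\F_{(n+1)sim})=\V\semp{D_{(1)sim}\land\nu D_{(n)opsim}}=\V\semp{D_{(n+1)sim}}$ in the layer where $\nu D_{(n)opsim}$ is available as a constant. Clause (4) is identical with the roles of $sim$ and $opsim$ swapped, additionally invoking the parts of Lemma~\ref{L:inverse} (and the analogous $\widetilde{D}$ facts) that relate inversion of functions to inversion of fixed points, already packaged into Lemma~\ref{lemma:D-F}.

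The main obstacle is bookkeeping rather than mathematics: one must set up the hierarchy of declarations $D_1,\dots,D_N$ and the indices carefully so that at stage $n+1$ the constants $\nu D_{(n)sim}(p)$ and $\nu D_{(n)opsim}(p)$ genuinely live in $\Sigma_0^{m}$ for the right $m$, and so that the semantic function $\M_{m+1}\semp{\,\cdot\,}$ interprets them as the previously constructed fixed points; this requires choosing the enumeration of declarations so that each $D_{(k)sim}$, $D_{(k)opsim}$ precedes $D_{(k+1)sim}$, $D_{(k+1)opsim}$, and threading $I=\Proc$ throughout. A secondary subtlety is that $\Phi^*(\F)$ is defined via the isomorphism $\Phi$ and equals $\V\semp{D}$ only as an endofunction on $\Po(\Proc)^{\Proc}$, so one must check the conjuncts are compared in the same function space and that Lemma~\ref{L:x}'s distributivity of $\Phi^*$ over $\cap$ and over $\F\cap A$ is applied with $A=\nu\F_{(n)opsim}$ a genuine constant relation — which it is, being a fixed point. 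Once these alignments are fixed, each step is a short equational chain of the kind already displayed for $D_{bisim}$ and $D_{simeq}$ in the excerpt.
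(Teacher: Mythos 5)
Your proposal is correct and follows essentially the same route as the paper's own proof: a simultaneous induction on $n$ reducing each clause via Corollary~\ref{L:Phiiso} to the equality $\Phi^*(\F_\bullet)=\V\semp{D_\bullet}$, with the base case from Lemma~\ref{lemma:D-F} and the inductive step given by distributing $\Phi^*$ over $\cap$ (Lemma~\ref{L:x}) and converting $\Phi(\nu\F_{(n)opsim})$ to $\nu\Phi^*(\F_{(n)opsim})$ via Theorem~\ref{thm:isomorph}. Even the bookkeeping you flag is handled in the paper exactly as you describe, by interleaving the declarations as $D_{2i-2}=D_{(i)sim}$ and $D_{2i-1}=D_{(i)opsim}$ so that each stage's fixed points are available as constants at the next.
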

\begin{proof}
  We prove the statements simultaneously by induction on $n$. First we
  note that $D_1, D_2,\ldots,$ where $D_{2i-2}=D_{(i)sim}$ and
  $D_{2i-1}=D_{(i)opsim}$ for $i\geq 1$ is a sequence of nested
  declarations.
 For the
  case $n=1$ we get from Lemma \ref{lemma:D-F} that
  $\Phi^*(\F_{(1)sim})=\V_{0}\semp{D_{(1)sim}}$ and
  $\Phi^*(\F_{(1)opsim})=\V_{1}\semp{D_{(1)opsim}}$.  Next assume that
  $\Phi^*(\F_{(n)sim})=\V_{2n-2}\semp{D_{(n)sim}}$ and
  $\Phi^*(\F_{(n)opsim})=\V_{2n-1}\semp{D_{(n)opsim}}$. To prove 3. we
  proceed as follows:
\[
\begin{array}{l}
\Phi^*(\F_{(n+1)sim})=\Phi^*(\F_{(1)sim})\cap\Phi(\nu\F_{(n)opsim})=
\V_{0}\semp{D_{(1)sim}}\cap\nu\V_{2n-2}\semp{D_{(n)opsim}}=
\\\V_{2n-2}\semp{D_{(1)sim}\land
  \nu D_{(n)opsim}}=\V_{2n}\semp{D_{(n+1)sim}}.
\end{array}
\]
Finally, to prove 4. we have:
\[
\begin{array}{l}
  \Phi^*(\F_{(n+1)opsim})=\Phi^*(\F_{(1)opsem})\cap\Phi(\nu\F_{(n)sim})=
  \V_{1}\semp{D_{(1)opsim}}\cap\nu\V_{2n-1}\semp{D_{(n)sim}}=\\
  \V_{2n-1}\semp{D_{(1)opsim}\land\nu D_{(n)sim}}=
\V_{2n+1}\semp{D_{(n+1)opsim}}.
\end{array}
\]
\end{proof}

%\section{Bibliography}
\nocite{*}
\bibliographystyle{eptcs}
\bibliography{anna}
\end{document}